\newtheorem{lemma}{Lemma}
\title{\bf  Invariant Properties of Linear-Iterative Distributed Averaging Algorithms and  Application to Error Detection}
\author{\authorblockN{Christoforos N. Hadjicostis and Alejandro D. {Dom\'{i}nguez-Garc\'{i}a}}
\thanks{C. N. Hadjicostis is with the ECE Department  at the University of Cyprus, Nicosia, Cyprus, and also with the ECE Department   at the University of Illinois at Urbana-Champaign, Urbana, IL 61801, USA. E-mail: {\tt\small hadjicostis.christoforos@ucy.ac.cy}.}
\thanks{Alejandro D. {Dom\'{i}nguez-Garc\'{i}a} is with the ECE Department  at the University of Illinois at Urbana-Champaign, Urbana, IL 61801, USA. E-mail: {\tt\small aledan@illinois.edu}.}
}
\begin{document}

\maketitle

\begin{abstract}
We consider the problem of average consensus in a distributed system comprising a set of nodes that can exchange information among themselves.  We focus on a class of algorithms for solving such a problem  whereby each node   maintains a state  and updates it iteratively as a linear combination of the states maintained by its in-neighbors, i.e., nodes from which it receives information directly. Averaging algorithms within   this class   can be thought of as discrete-time  linear  time-varying systems without external driving inputs and whose state matrix is column  stochastic.  As a result, the algorithms exhibit a   global invariance property in that the sum of the state variables remains constant at all times. In this paper, we report on  another invariance property  for the aforementioned  class of averaging algorithms. This property is local to each node and reflects the conservation of  certain quantities capturing an aggregate of all the values received by a node from its in-neighbors and all the values sent by said node to its out-neighbors (i.e., nodes to which it sends information directly) throughout the execution of the averaging algorithm. We show how this newly-discovered invariant can be leveraged  for detecting errors while executing the  averaging algorithm.
%
%
%
\end{abstract}

\section{Introduction}
 We consider distributed systems that consist of a set of nodes that can exchange information among themselves. In such systems, it is often necessary for all or some of the nodes to calculate a function of certain parameters, with  each of these  possessed by an individual node \cite{1984:Tsitsiklis,2018:BOOK}. For example, when all nodes calculate the average of these parameters, they are said to reach average consensus. Because of  its usage in numerous distributed control, computing, and communication  applications,   average consensus algorithms for distributed systems have been researched extensively over the last few years (see, e.g., \cite{1984:Tsitsiklis,2018:BOOK,1996:Lynch, 2004:Murray,2008:Cortes}). 
 
 An important class of such distributed averaging algorithms  rely on each node maintaining a state  that is updated iteratively as a linear combination of the states maintained by in-neighbors, i.e., nodes from which it receives information directly    (see, e.g.,  \cite{1984:Tsitsiklis,2004:Murray,2018:BOOK} and the references therein). Such class includes, e.g., gossip algorithms \cite{boyd2006randomized},   the push-sum algorithm and its variants \cite{2003:Kempe,2010:Benezit},   the ratio-consensus algorithm and its variants \cite{2010:christoforos,dominguez2012resilient,nitinTAC2016}, and several others.  As it turns out, one can think of the linear iterations on which the aforementioned class of algorithms rely as discrete-time linear time-varying  systems without external driving inputs and whose state matrix is column  stochastic. Thus, many results in this area can be leveraged when analyzing the behavior of such algorithms (the book by Seneta \cite{Se:06}, originally published in 1973,  is a must-read reference). 
 
 Because of the aforementioned column-stochasticity property, the linear-iterative average consensus algorithms of interest in this paper exhibit the following (well-known) invariance property:  if one assumes that the computations performed by each node in updating its state are error free,  the sum of the states for all nodes in the system must remain constant and equal to the sum of the initial values. However, this invariance property might not hold in general  if such computations contain errors. Thus, one could leverage this property to detect the presence of computation errors.
 
An issue with the invariance property discussed above is that in order to check it, one needs to have access to all the states of  the system. This implies that if this property were to be used to implement an error detection scheme, it would be necessary to have a processor with access to the states of all individual nodes, i.e., such error detection would be essentially centralized.   In this paper, we report on  another invariance property  for the aforementioned  class of   averaging algorithms   that we have discovered. To the best of our knowledge, the existence of such property has not been reported in the literature before. As it turns out, the newly-discovered invariant is local to each node and reflects a conservation property involving certain quantities capturing an aggregate of all the values received by a node from its in-neighbors   and all the values sent by said node to its out-neighbors (i.e., nodes to which it sends information directly) during the execution of the averaging algorithm.

The local nature of the newly-discovered invariant makes it ideal for implementing distributed schemes for detecting errors in the computations performed by the nodes of a distributed system executing  any averaging algorithm within the class of interest. In this paper,  we propose one such error detection scheme that allows each node  in the system  to check whether or not its in-neighbors  have performed their updates correctly. In particular, each node uses information that it receives periodically from {\em two-hop}  in-neighbors, i.e., the in-neighbors of its in-neighbors. Using two-hop information is also leveraged in \cite{yuan2019resilient, yuan2021resilient, yuan2021secure} in the context of detection of faulty/malicious nodes in a distributed system. However, unlike our proposed error detection scheme, all these other  schemes require two-hop information at each iteration.

\section{Communication Topology Model} \label{SECTopologyModel}

Consider a distributed system comprising $N$ nodes, denoted by $v_i,~i=1,2,\dots,N$. Assume that the nodes can potentially exchange information among themselves. We capture such exchange of information  by  a strongly connected directed graph $\mathcal{G}=\{\mathcal{V},\mathcal{E} \}$, with $\mathcal{V} =  \{v_1, v_2, \dots, v_N\}$ and $\mathcal{E} \subseteq \mathcal{V} \times \mathcal{V} - \{ (v_j,v_j)$ $|$ $v_j \in \mathcal{V} \}$ such that $(v_j, v_i) \in \mathcal{E}$ if node $v_j$ may receive information from node $v_i$.\footnote{We say that a directed graph $\mathcal{G}= \{\mathcal{V}, \mathcal{E}\}$ is  \text{strongly connected} if for each pair of nodes $v_j, v_i \in \mathcal{V}$, $v_j \neq v_i$, there exists a directed \textit{path} from $v_i$ to $v_j$, i.e., we can find a sequence of nodes $v_i =: v_{l_0},v_{l_1}, \dots, v_{l_t} := v_j$ such that $(v_{l_{\tau+1}},v_{l_{\tau}}) \in \mathcal{E}$ for $ \tau = 0, 1, \dots , t-1$.}  Note that the exchange of information between pairs of nodes may be \textit{asymmetric}, i.e., $v_j$ may receive information from node $v_i$ but not vice-versa; in such case, we have that  $(v_j, v_i) \in \mathcal{E}$ but $(v_i, v_j) \notin \mathcal{E}$.
 
Nodes that can potentially send information to node $v_j$ directly are said to be its in-neighbors and belong to 
the set $\mathcal{N}_j^- := \{v_i \in \mathcal{V} \; | \; (v_j, v_i) \in \mathcal{E} \}$, which is referred to as the in-neighborhood of node $v_j$. Similarly, nodes that can potentially receive information from node $v_j$ directly are said to be its out-neighbors and belong to the set $\mathcal{N}_j^+ := \{v_l \in \mathcal{V} \; | \; (v_l, v_j) \in \mathcal{E} \}$, which  is referred to as the out-neighborhood of node $v_j$. The cardinality of  $\mathcal{N}_j^-$, which we denote by $D_j^-$, is referred to as the \textit{in-degree} of $v_j$, whereas the cardinality of  $\mathcal{N}_j^+$, which we denote by $D_j^+$, is referred to as the \textit{out-degree} of $v_j$.

\section{Distributed Averaging via Linear Iterations} \label{sec:algorithms}

Consider an $N$-node distributed system whose communication topology is described by a strongly connected directed graph $\mathcal{G}=\{\mathcal{V},\mathcal{E} \}$. Assume that each node $v_j$ possesses a value $V_j$ and the objective for all the nodes is to compute the average of the $V_j$'s, i.e., 
\begin{align}
\overline{V}:=\frac{\sum_{l=1}^N V_l}{N} \; . \label{DEFv}
\end{align} 
Next, we describe a family of iterative algorithms that allow each node $v_j \in \mathcal{V}$ to compute $\overline{V}$; such algorithms are distributed in the sense that they conform to the constraints imposed  on the exchange of information among nodes as captured by $\mathcal{G}=\{\mathcal{V},\mathcal{E} \}$.

At discrete time instants indexed by $k=0,1,2,\dots,$ each node $v_j$ sends some information to a subset of its out-neighbors, which we denote by $\mathcal{N}_j^+[k]\subseteq \mathcal{N}_j^+$. This implies that at each time instant $k$, node $v_j$ only receives information from a subset of  nodes in $\mathcal{N}_j^-$; we denote such set by  $\mathcal{N}_j^-[k]\subseteq \mathcal{N}_j^-$. Each node $v_j$ maintains two state variables,  $y_j[k]$ and $z_j[k]$, and uses them to obtain $\overline{V}$ as follows.  Let  $x_j[k]=\big [y_j[k] , z_j[k]\big]^\top$; then,  each node $v_j$ performs the following  operations:
\begin{itemize}
\item[\bf O1.] Initially, it sets $x_j[0]=[V_j,1]^\top$. 
\item[\bf O2.] At each $k=0,1,2,\dots$, it sends the value $w_{lj}[k]x_j[k]$ to each  $v_l \in \mathcal{N}_j^+[k]$, where $w_{lj}[k]$ is some time-varying weight whose choice is described later.
\item[\bf O3.] At each $k=0,1,2,\dots$, it receives the value $w_{ji}[k] x_i[k]$ from each $v_i \in \mathcal{N}_j^-[k]$. 
\item[\bf O4.] It updates the value of $x_j[k]$ at each $k=0,1,2,\dots$, as follows:
\begin{align}
x_j[k+1]&=w_{jj}[k] x_j[k]+\sum_{v_i \in \mathcal{N}_j^-[k]  } w_{ji}[k]x_i[k],   \label{alg_x} 
\end{align}
where $w_{jj}[k]$ is a time-varying weight  whose choice is described later.
\item[\bf O5.] At each $k=0,1,2,\dots$, it also calculates  
\begin{align}
r_j[k] := \frac{y_j[k]}{z_j[k]}. \label{ratio_def}
\end{align}
\end{itemize}
Below, we will argue that proper choice of  the $w_{ji}[k]$'s together with some standard assumption regarding how often nodes receive information from their in-neighbors will result in  the $r_j[k]$'s asymptotically converging to $\overline{V}$.

\subsection{Weight Choice} Let $y[k]=\big[y_1[k],y_2[k],\dots,y_N[k]\big]^\top$ and $z[k]=\big[z_1[k],z_2[k],\dots,z_N[k]\big]^\top$, and define
\begin{align} 
X[k]&=\big [y[k], z[k] \big ] \nonumber \\
&=\big[x_1[k],x_2[k],\dots,x_N[k]\big]^\top \in \mathbb{R}^{N \times 2};
\end{align}  then we can rewrite \eqref{alg_x} in matrix form as follows:
\begin{align}
X[k+1]=W[k]X[k], \label{alg_x_matrix} 
\end{align}
where $W[k] =\big [w_{l,j}[k]\big]\in \mathbb{R}^{ N \times N}$ is referred to as the weight matrix at time instant $k$ with $w_{l,j}[k]=w_{lj}[k]$ if $v_l \in \mathcal{N}_j^+[k] \cup \{v_j \}$, and $w_{l,j}[k]=0$ otherwise. Now, choose the entries of the weight matrix $W[k]$ so as to satisfy the following properties: 
\begin{itemize}
\item[\bf W1.] $w_{l,j}[k] =0,~v_l \notin \mathcal{N}_j^+[k] \cup \{v_j \}$,
\item[\bf W2.]  $w_{l,j}[k] \in (\varepsilon,1-\varepsilon)$, where $\varepsilon>0$ is small and bounded away from zero, and
\item[\bf W3.]  $w_{j,j}[k]=1-\sum_{l \in  \mathcal{N}_j^+[k]} w_{l,j}[k], $ such that $w_{j,j}[k]>\varepsilon$, 
\end{itemize}
for all $v_j \in \mathcal{V}$.  

Such choice of weights, which results in    $W[k]$ being a column stochastic matrix,  is appealing because each node $v_j$ can easily implement it independently of all other nodes as illustrated in the following two special cases.
 \subsubsection{Ratio-Consensus Algorithm}  Consider the particular case in which each node $v_j \in \mathcal{V}$ sends information at each $k=0,1,2\dots$, to all its out-neighbors, i.e., $\mathcal{N}_j^+[k]=\mathcal{N}_j^+$ for all $k$, which implies that  $\mathcal{N}_j^-[k]=\mathcal{N}_j^-$ for all $k$. Assume that each node $v_j \in \mathcal{V}$ sets $w_{lj}[k]=\frac{1}{1+D_j^+}$ for all $v_l \in \mathcal{N}_j^+[k] \cup \{v_j\}=\mathcal{N}_j^+  \cup \{v_j\}$, where $D_j^+=\big | \mathcal{N}_j^+  \big |$. Then, \eqref{alg_x} reduces to
\begin{align}
x_j[k+1]&=\frac{1}{1+D_j^+} x_j[k]+\sum_{v_i \in \mathcal{N}_j^-  } \frac{1}{1+D_i^+}x_i[k],   \label{alg_x_LTI_1} 
\end{align}
with $x_j[0]=[V_j,1]^\top$ for all $v_j \in \mathcal{V}$. The specific instance of the algorithm described by Operations~O1--O5 that results from this choice of weights was   proposed in \cite{2010:christoforos,dominguez2012resilient}, and is referred to as the ratio-consensus algorithm.

\subsubsection{Generalized Push-Sum} Consider the general iteration described earlier and let $D_j^{+}[k]$ denote the number of out-neighbors of node $v_j$ to which it sends information at instant $k$, i.e., $D_j^{+}[k]=\big |\mathcal{N}_j^+[k]\big |$; then, node $v_j$ can set $w_{lj}[k]=\frac{1}{1+D_j^+[k]},~v_l \in \mathcal{N}_j^+[k] \cup \{v_j\}$. In this case, \eqref{alg_x} reduces to
\begin{align}
x_j[k+1]&=\frac{1}{1+D_j^+[k]} x_j[k]+\sum_{v_i \in \mathcal{N}_j^-[k]  } \frac{1}{1+D_i^+[k]}x_i[k],   \label{alg_x_LTV_1} 
\end{align}
with $x_j[0]=[V_j,1]^\top$ for all $v_j \in \mathcal{V}$. The specific instance of the algorithm described by Operations~O1--O5 that results from this choice of weights was  proposed in \cite{2010:Benezit}, and it is a generalized version of the so-called push-sum algorithm, which was first proposed in \cite{2003:Kempe}.

\subsection{Convergence Analysis}

Since each node $v_j \in \mathcal{V}$ may not receive information from  all its in-neighbors  at each time instant, we can describe the exchange of information among nodes at instant $k$ by a directed graph $\mathcal{G}[k]=\big \{\mathcal{V},\mathcal{E}[k] \big \},$ where $(v_l,v_j) \in \mathcal{E}[k]$ if $v_l \in \mathcal{N}_j^+[k]$. In the remainder of the paper, we make the following assumption  about the collection of graphs $\mathcal{G}[k]=\big \{\mathcal{V},\mathcal{E}[k] \big \},~k=0,1,2\dots,$ which is standard in the literature of average consensus (see, e.g., \cite{2018:BOOK}).

{\bf Assumption~A1.} There exists a finite $K$ such that 
\begin{align}
\mathcal{G}_K(\tau)
:= \big \{ \mathcal{V},  \cup_{t=0}^{K-1} \mathcal{E}[\tau K+t]  \big \}, \quad \tau=0,1,2,\dots,
\end{align}
is strongly connected. 

Now by using \eqref{alg_x_matrix}, we have that
 \begin{align}
 X[k]=\underbrace{W[k-1]W[k-2]\cdots W[0]}_{=:T[k]}X[0];
 \end{align}
 thus, convergence of $X[k]$ as $k \rightarrow \infty$ is governed by the column stochastic matrix $T[k]$. Then, under Assumption~A1, as $k$ goes to infinity,  we have that
 \begin{align}
 T[k]=\pi[k] \textbf{1}_N^\top, \label{weak_ergodicity}
 \end{align}
 where $\pi[k]=\big[\pi_1[k],\pi_2[k],\dots,\pi_N[k]\big]^\top$ with $0<\pi_i[k]<1$ for all $i=1,2,\dots,N,$ such that $\sum_{i=1}^N \pi_i[k]=1$, and   $ \textbf{1}_N$ denotes the $N$-dimensional all-ones vector. In words, the columns of $T[k]$ will equalize asymptotically, but they do not necessarily converge, i.e., in general they will still depend on $k$. The literature of stochastic matrices refers to this convergence result as weak ergodicity of the sequence of $W[k]$'s,    whereas if  $\pi[k]$ converges to a limit, it is said that strong ergodicity of the sequence of $W[k]$'s  obtains \cite{Se:06}. Finally, by using \eqref{weak_ergodicity}, as $k$ goes to infinity,  we have that
 \begin{align}
 y_j[k] &=\pi_j[k] \Big (\sum_{l=1}^N y_l[0] \Big), \\
z_j[k] &=\pi_j[k] \Big (\sum_{l=1}^N z_l[0] \Big). 
 \end{align}
Thus,   the ratio $r_j[k] = y_j[k]/z_j[k]$ asymptotically converges to the average of the initial values,  i.e.,
 \begin{align}
 \lim_{k\rightarrow\infty} r_j[k]& =\frac{y_j[k]}{z_j[k]} \nonumber \\
 & = \displaystyle \frac{\sum_{l=1}^N y_l[0]}{\sum_{l=1}^N z_l[0]} \nonumber \\ 
 &= \overline{V}, \quad  \forall v_j \in \mathcal{V}, \label{EQratiolim}
 \end{align}
 despite the fact that $y_j[k]$ and $z_j[k]$ do not converge in general. 
 
 \subsubsection*{Balanced Weights}
  Under Assumption~A1, there is one special case that guarantees convergence of $T[k]$ to a limit. Namely, when the matrices $W[k],~k=0,1,2,\dots,$ are doubly stochastic, i.e., in addition to satisfying Properties~W1--W3, the entries of $W[k]$ are such that  $\sum_{i=1}^N w_{j,i}[k]=1$ for all $v_j \in \mathcal{V}$; in such case, we say the weights are balanced.  Then, it can be shown that
\begin{align}
\lim_{k \rightarrow \infty} T[k]=\frac{1}{N}\textbf{1}_N\textbf{1}_N^\top;
\end{align}
thus, 
\begin{align}
\lim_{k \rightarrow \infty} y_j[k]  &=\frac{\sum_{l=1}^N y_l[0] }{N}, \\
&= \frac{\sum_{l=1}^N V_l }{N}=\overline{V}, \\
\lim_{k \rightarrow \infty} z_j[k] &=1. 
\end{align}
Clearly, in such case, each node $v_j$ only needs to maintain one variable, namely $y_j[k]$,  to be able to obtain the average. However, it is not easy  to obtain, in a distributed manner, a weight matrix $W[k]$ that is doubly stochastic and conforms to  a general directed graph $\mathcal{G}[k] =\big \{\mathcal{V},\mathcal{E}[k] \big \}$ (see, for example, \cite{2013TACmatrixbalaning}), unless the exchange of information between pairs of nodes is \textit{symmetric}, i.e., if at instant $k$ node $v_j$ receives information from node $v_i$ then node $v_i$ receives information from node $v_j$. In such case, we have that  $(v_j, v_i) \in \mathcal{E}[k]$ if and only if $(v_i, v_j) \in \mathcal{E}[k]$, from where it follows that $\mathcal{N}_j^+[k]=\mathcal{N}_j^-[k] =:\mathcal{N}_j[k]$. Then, each $v_j \in \mathcal{V}$ can set  $w_{lj}[k]=\frac{1}{N},~v_l \in \mathcal{N}_j[k]$, and $w_{jj}[k]=1-\frac{D_j[k]}{N}$, where $D_j[k]=\big |\mathcal{N}_j[k]\big |$; otherwise, if $v_l \notin \mathcal{N}_j[k] \cup \{ v_j \}$, node $v_j$ sets $w_{lj}[k]=0$. Other choices also exist.

 \section{Invariant Properties of  Linear-Iterative Distributed Averaging Algorithms} \label{sec:invariants}

In this section, we first review a  well-known invariant property of the class of linear-iterative distributed averaging algorithms discussed in Section~\ref{sec:algorithms}. Then, we  introduce another invariance property that we have discovered. Specifically, we show that   there is a quantity associated to each node   that must remain invariant throughout the execution of the algorithm if there are no errors in the computations performed by each node in updating its state.  If said quantity  varies from the value it is supposed to have, it could be an indicator that there are errors in the computations that  a particular node  is performing; such application of the newly-discovered invariant is discussed in  Section~\ref{sec:error_detection_app}. 

 \subsection{Global Invariant}
 
As discussed earlier, and shown in \eqref{alg_x_matrix}, the averaging algorithms of interest are undriven discrete-time linear  time-varying systems whose state matrix is column  stochastic. As a result,  the sum of the state variables remains constant at all times; such  invariance property is formally established next.

 \begin{lemma}
 Consider \eqref{alg_x} with the  $w_{ji}[k]$'s  chosen so that the weight matrix $W[k]=\big [w_{j,i}[k]\big] \in \mathbb{R}^{N \times N}$ in  \eqref{alg_x_matrix} satisfies Properties~W1--W3. Then,
 \begin{align}
 \sum_{j=1}^N x_j[k]=  \sum_{j=1}^N x_j[0]
 \end{align}
 for all $k=0,1,2,\dots$.
 \end{lemma}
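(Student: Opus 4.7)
The plan is to reduce the claim to the fact that $W[k]$ is column stochastic for every $k$, and then iterate. Specifically, I would first verify column stochasticity directly from Properties~W1--W3: for any column index $j$, Property~W1 kills every entry $w_{l,j}[k]$ with $v_l \notin \mathcal{N}_j^+[k] \cup \{v_j\}$, while Property~W3 precisely rearranges to $w_{j,j}[k] + \sum_{v_l \in \mathcal{N}_j^+[k]} w_{l,j}[k] = 1$. Together these give $\mathbf{1}_N^\top W[k] = \mathbf{1}_N^\top$ for every $k$.

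Once column stochasticity is in hand, I would work with the matrix form \eqref{alg_x_matrix}. Left-multiplying both sides of $X[k+1] = W[k] X[k]$ by $\mathbf{1}_N^\top$ yields
\begin{equation*}
\mathbf{1}_N^\top X[k+1] = \mathbf{1}_N^\top W[k] X[k] = \mathbf{1}_N^\top X[k],
\end{equation*}
which is exactly the vector identity $\sum_{j=1}^N x_j[k+1]^\top = \sum_{j=1}^N x_j[k]^\top$ (both sides being $1 \times 2$ row vectors). A trivial induction on $k$ then closes the argument.

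If one prefers a component-level derivation without invoking the matrix form, the same identity follows by summing \eqref{alg_x} over $j$ and swapping the order of summation in the double sum: the term $w_{ji}[k] x_i[k]$ with $v_i \in \mathcal{N}_j^-[k]$ is the same, by the very definition of in-/out-neighborhoods, as the term attached to node $v_i$ sending to $v_j \in \mathcal{N}_i^+[k]$. Collecting the coefficient of $x_i[k]$ then produces $w_{i,i}[k] + \sum_{v_j \in \mathcal{N}_i^+[k]} w_{j,i}[k]$, which is $1$ by Property~W3.

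There is really no hard step here; the entire content of the lemma is the column-stochasticity of $W[k]$ guaranteed by W1 and W3. The only minor care is to remember that $x_j[k] \in \mathbb{R}^2$, so the conclusion is to be read componentwise (equivalently, the claim applied separately to $y[k]$ and $z[k]$), and to make clear that Property~W2 (strict positivity of weights) plays no role in this global invariant, since the invariant relies solely on the column sums equalling one.
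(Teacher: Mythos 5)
Your proposal is correct and follows essentially the same route as the paper: establish that W1 and W3 make $W[k]$ column stochastic, so $\mathbf{1}_N^\top W[k] = \mathbf{1}_N^\top$, and then left-multiply $X[k+1] = W[k]X[k]$ by $\mathbf{1}_N^\top$ to conclude. The only difference is that you spell out the verification of column stochasticity (and offer a componentwise variant), whereas the paper takes column stochasticity as already established from the weight construction.
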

 
 \begin{proof}
Since $W[k]$ is column stochastic, we have that $\textbf{1}_N^\top W[k]=\textbf{1}_N^\top$ for all $k=0,1,2,\dots$; then, by using \eqref{alg_x_matrix}, it follows that
 \begin{align}
 \textbf{1}_N^\top X[k+1] &=  \textbf{1}_N^\top W[k]X[k] \nonumber \\
 &=  \textbf{1}_N^\top  X[k],
 \end{align}
 for $k=0,1,2,\dots$;  thus, $\sum_{j=1}^N x_j[k] = \sum_{j=1}^N x_j[0]$ for all $k=0,1,2,\dots$.
 \end{proof}
 
Note that because  the  invariance property  is global,  one would need to have access to all the states of the system in order to check it. In fact, because of the way \eqref{alg_x} is initialized, if such global information were readily available, one would immediately know the value of the sum of the $V_j$'s and the number of nodes in the system, and thus it would be straightforward to compute $\overline{V}$.

 \subsection{Local Invariant}

Next, we introduce   another invariance property  satisfied by the averaging algorithms considered in this paper. This invariance property is local to each node and reflects a conservation property involving certain quantities that amount to aggregates of all the values received by a node from its in-neighbors and all the values sent by said node to its out-neighbors throughout the execution of the averaging algorithm. The property is formally stated as follows. 

\begin{lemma} 
Consider \eqref{alg_x} with the  $w_{ji}[k]$'s  chosen so that the weight matrix $W[k]=\big [w_{j,i}[k]\big] \in \mathbb{R}^{N \times N}$ in  \eqref{alg_x_matrix} satisfies Properties~W1--W3. For each $v_j \in \mathcal{V}$,  and each $v_l \in \mathcal{N}_j^+$, define
\begin{align}
\sigma_{lj}[k+1]=\begin{cases} 
\sigma_{lj}[k]+w_{lj}[k]x_j[k], &  \text{if} \quad v_l \in \mathcal{N}_j^+[k], \\ 
\sigma_{lj}[k], &   \text{if} \quad v_l \in \mathcal{N}_j^+ \backslash   \mathcal{N}_j^+[k],
\end{cases}
\end{align}
for $k=0,1,2,\dots$, 
with $\sigma_{lj}[0]=0$. Then, 
 \begin{align}
 \eta_j[k]:=&~x_j[k]+\sum_{v_l \in \mathcal{N}_j^+}\sigma_{lj}[k]-\sum_{v_i \in \mathcal{N}_j^-  } \sigma_{ji}[k] \nonumber \\
= &~x_j[0],   \label{alg_x_LTV_1_inv} 
 \end{align}
 for all $k =0,1, 2 \dots$.
 \end{lemma}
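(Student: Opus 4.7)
My plan is to prove the identity by induction on $k$, since both sides equal $x_j[0]$ at $k=0$ and the $\sigma_{lj}[k]$'s are themselves defined recursively; the natural thing is to show that $\eta_j[k+1]-\eta_j[k]=0$ for every $k$. The base case is immediate: by the prescribed initial condition $\sigma_{lj}[0]=0$ for all $v_l\in\mathcal{N}_j^+$ and $\sigma_{ji}[0]=0$ for all $v_i\in\mathcal{N}_j^-$, so $\eta_j[0]=x_j[0]$.

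For the inductive step, I would isolate the three increments that make up $\eta_j[k+1]-\eta_j[k]$. First, from the update rule \eqref{alg_x}, $x_j[k+1]-x_j[k] = (w_{jj}[k]-1)x_j[k] + \sum_{v_i\in\mathcal{N}_j^-[k]} w_{ji}[k]x_i[k]$. Second, from the definition of $\sigma_{lj}[\cdot]$ only those $v_l\in\mathcal{N}_j^+[k]$ contribute, giving $\sum_{v_l\in\mathcal{N}_j^+}\bigl(\sigma_{lj}[k+1]-\sigma_{lj}[k]\bigr) = x_j[k]\sum_{v_l\in\mathcal{N}_j^+[k]} w_{lj}[k]$, which by Property~W3 equals $(1-w_{jj}[k])x_j[k]$. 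Third, since $v_i\in\mathcal{N}_j^-[k]$ is equivalent to $v_j\in\mathcal{N}_i^+[k]$, the term $\sigma_{ji}[k+1]-\sigma_{ji}[k]$ equals $w_{ji}[k]x_i[k]$ exactly when $v_i$ sends to $v_j$ at time $k$, so $\sum_{v_i\in\mathcal{N}_j^-}\bigl(\sigma_{ji}[k+1]-\sigma_{ji}[k]\bigr) = \sum_{v_i\in\mathcal{N}_j^-[k]} w_{ji}[k]x_i[k]$.

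Adding the first two contributions and subtracting the third, the $(w_{jj}[k]-1)x_j[k]$ and $(1-w_{jj}[k])x_j[k]$ pieces cancel, and the two sums over $v_i\in\mathcal{N}_j^-[k]$ cancel as well, yielding $\eta_j[k+1]-\eta_j[k]=0$. Combined with the base case, this completes the induction.

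The argument itself is essentially bookkeeping, so there is no genuinely hard step; the main thing to be careful about is correctly identifying the indexing convention for $\sigma_{ji}[k]$, namely that it is the cumulative mass that in-neighbor $v_i$ has sent \emph{to} $v_j$, and hence its increment is governed by the set $\mathcal{N}_i^+[k]$ rather than $\mathcal{N}_j^-[k]$ directly (the two are of course equivalent, but confusing them would invalidate the cancellation). A secondary point worth making explicit is that Property~W3, which couples the self-weight $w_{jj}[k]$ to the weights on edges out of $v_j$ that are active at time $k$, is exactly what makes the out-neighbor increments match the decrement $(1-w_{jj}[k])x_j[k]$ produced by the state update—so the local conservation property is essentially a reformulation of column stochasticity at the level of each individual node.
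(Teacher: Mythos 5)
Your proof is correct and follows essentially the same route as the paper's: induction on $k$, with the outgoing increments summing to $(1-w_{jj}[k])x_j[k]$ via Property~W3 and the incoming increments cancelling the received terms in the state update. The only cosmetic difference is that you organize the inductive step as $\eta_j[k+1]-\eta_j[k]=0$ rather than expanding $\eta_j[k+1]$ and regrouping, which is the same computation.
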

 
 \begin{proof}
 By induction. For $k=0$, we have that
\begin{align}
 \eta_j[0] =&~ x_j[0]+ \underbrace{\sum_{v_l \in \mathcal{N}_j^+ } \sigma_{lj}[0]}_{=0}- \underbrace{\sum_{v_i \in \mathcal{N}_j^- } \sigma_{ji}[0]}_{=0} \nonumber \\
 =&~   x_j[0].
 \end{align}

 Assume that \eqref{alg_x_LTV_1_inv} holds for some $k \geq 0$.  Then, for $k+1$, we have that
\begin{align}
 \eta_j[k+1] =&~ x_j[k+1]+\sum_{v_l \in \mathcal{N}_j^+ } \sigma_{lj}[k+1]-\sum_{v_i \in \mathcal{N}_j^- } \sigma_{ji}[k+1] \nonumber \\
 =&~   w_{jj}[k]x_j[k]+\sum_{v_i \in \mathcal{N}_j^-[k]  } w_{ji}[k] x_i[k]    \nonumber \\
  &+   \sum_{v_l \in \mathcal{N}_j^+  }    \sigma_{lj}[k]  + \sum_{v_l \in \mathcal{N}_j^+[k] }   w_{lj}[k] x_j[k]    \nonumber \\
 &-  \sum_{v_i \in \mathcal{N}_j^-  }  \sigma_{ji}[k] -\sum_{v_i \in \mathcal{N}_j^-[k] }  w_{ji}[k] x_i[k]   \nonumber \\
=&~ \underbrace{\left( w_{jj}[k] +\sum_{v_l \in \mathcal{N}_j^+[k] } w_{lj}[k]  \right)}_{=1} x_j[k]  \nonumber \\
&  +    \sum_{v_l \in \mathcal{N}_j^+ }  \sigma_{lj}[k] -\sum_{v_i \in \mathcal{N}_j^- }  \sigma_i[k]  \nonumber \\
 =&~\eta_j[k];
 \end{align}
 thus, since  by the inductive assumption we have that $\eta_j[k]=x_j[0]$, it follows that $\eta_j[k+1]=x_j[0]$.   \end{proof}
 
 For the special case in \eqref{alg_x_LTI_1}, i.e., the so-called ratio-consensus algorithm, the invariant quantity associated with each $v_j \in \mathcal{V}$ is slightly simpler in that we do not need separate $\sigma_{lj} [k]$'s for each $v_l \in \mathcal{N}_j^+$; this is captured by the  result in the following lemma.

\begin{lemma}
Consider the linear iteration in \eqref{alg_x_LTI_1}. For  each $v_j \in \mathcal{V}$, define
\begin{align}
\sigma_j[k+1]&= \sigma_j[k] + \frac{1}{1+D_j^+}x_j[k], \quad k =1, 2,\dots, \label{runningsumratio}
\end{align}
with $\sigma_{j}[0]=0$.  Then, 
 \begin{align}
 \eta_j[k]:=&~x_j[k]+D_j^+\sigma_j[k]-\sum_{v_i \in \mathcal{N}_j^- } \sigma_i[k] \nonumber \\
= &~x_j[0],   \label{alg_x_LTI_1_inv} 
 \end{align}
 for all $k =0,1,  \dots$.
 \end{lemma}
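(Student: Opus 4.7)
The plan is to derive this lemma directly as a specialization of the preceding lemma rather than redoing the induction from scratch. The ratio-consensus iteration \eqref{alg_x_LTI_1} is precisely the instance of \eqref{alg_x} in which $\mathcal{N}_j^+[k] = \mathcal{N}_j^+$ for every $k$ and $w_{lj}[k] = \tfrac{1}{1+D_j^+}$ for every $v_l \in \mathcal{N}_j^+ \cup \{v_j\}$; these weights satisfy Properties~W1--W3, so the previous lemma applies and furnishes the edge-indexed invariant \eqref{alg_x_LTV_1_inv} at each node $v_j$.

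The key step is to show that, under the ratio-consensus weights, the edge-indexed quantities $\sigma_{lj}[k]$ from the previous lemma collapse to the node-indexed quantities $\sigma_j[k]$ defined in \eqref{runningsumratio}. Because $\mathcal{N}_j^+[k] = \mathcal{N}_j^+$ and $w_{lj}[k] = \tfrac{1}{1+D_j^+}$ are independent of both $k$ and $l$, the recursion defining $\sigma_{lj}[k]$ reduces to
\[
\sigma_{lj}[k+1] = \sigma_{lj}[k] + \tfrac{1}{1+D_j^+}\, x_j[k]
\]
for every $v_l \in \mathcal{N}_j^+$, with a common zero initial condition. A one-line induction on $k$ then gives $\sigma_{lj}[k] = \sigma_j[k]$ for all such $v_l$, and hence $\sum_{v_l \in \mathcal{N}_j^+} \sigma_{lj}[k] = D_j^+ \sigma_j[k]$. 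Applying the same identification at each in-neighbor $v_i \in \mathcal{N}_j^-$, viewed as a node for which $v_j$ is itself an out-neighbor, yields $\sigma_{ji}[k] = \sigma_i[k]$.

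Substituting these two identifications into \eqref{alg_x_LTV_1_inv} produces \eqref{alg_x_LTI_1_inv} immediately. The only point that requires a little care is the edge-orientation bookkeeping: in the previous lemma, $\sigma_{lj}[k]$ is maintained at $v_j$ and tracks what $v_j$ has accumulated to send toward $v_l$, whereas $\sigma_{ji}[k]$ is maintained at $v_i$ and tracks what $v_i$ has accumulated to send toward $v_j$. Once this correspondence is in place, no further computation is needed. If a self-contained proof is preferred, one can instead mimic the induction of the previous lemma verbatim, using the elementary cancellation $\tfrac{1}{1+D_j^+} + D_j^+ \cdot \tfrac{1}{1+D_j^+} = 1$ to absorb the $x_j[k]$ term and the pairwise cancellation of the $x_i[k]$ contributions to absorb those from the in-neighbors. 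I do not anticipate a genuine obstacle along either route; the lemma is essentially a notational simplification permitted by the fact that ratio-consensus uses a single time-invariant weight per sender.
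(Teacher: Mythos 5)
Your proof is correct, but it takes a genuinely different route from the paper: the paper proves this lemma by a fresh induction on $k$, essentially repeating the argument of the preceding lemma in the special case (using the cancellation $\tfrac{1}{1+D_j^+} + D_j^+\cdot\tfrac{1}{1+D_j^+}=1$ for the $x_j[k]$ term and the pairwise cancellation of the in-neighbor contributions), whereas you derive it as a corollary of the edge-indexed invariant \eqref{alg_x_LTV_1_inv}. Your reduction is sound: the ratio-consensus weights do satisfy Properties~W1--W3 (with $w_{j,j}[k]=\tfrac{1}{1+D_j^+}$ matching W3), the constancy of $\mathcal{N}_j^+[k]=\mathcal{N}_j^+$ and of the per-sender weight makes all $\sigma_{lj}[k]$ for $v_l\in\mathcal{N}_j^+$ coincide with $\sigma_j[k]$ by a trivial induction, and the orientation bookkeeping you flag ($\sigma_{ji}[k]$ being maintained at $v_i$ and collapsing to $\sigma_i[k]$) is exactly the point that needs to be, and is, handled. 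What your approach buys is economy and an explicit logical dependence --- the lemma is exposed as a notational specialization rather than a new fact; what the paper's direct induction buys is self-containedness and a concrete display of the cancellation mechanics that the error-detection section later exploits. (One minor remark: the recursion \eqref{runningsumratio} is stated in the paper for $k=1,2,\dots$, which appears to be a typo for $k=0,1,2,\dots$; your reading, consistent with the general lemma, is the intended one.)
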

 
 \begin{proof}
By induction.  For $k=0$, we have that
\begin{align}
 \eta_j[0] =&~ x_j[0]+\underbrace{D_j^+\sigma_j[0]}_{=0}-\underbrace{\sum_{v_i \in \mathcal{N}_j^- } \sigma_i[0] }_{=0}\nonumber \\ 
 =&~x_j[0].
 \end{align} 

Assume \eqref{alg_x_LTI_1_inv} holds for some $k \geq 0$. Then, for $k+1$, we have that
\begin{align}
 \eta_j[k+1] =&~ \underbrace{ \frac{1}{1+D_j^+} x_j[k]+\sum_{v_i \in \mathcal{N}_j^-  } \frac{1}{1+D_i^+}x_i[k]  }_{=x_j[k+1]}\nonumber \\
 &+D_j^+  \underbrace{\left (\sigma_j[k]+\frac{1}{1+D_j^+} x_j[k] \right)  }_{=\sigma_j[k+1]} \nonumber \\
 &-\sum_{v_i \in \mathcal{N}_j^- } \underbrace{ \left  (\sigma_i[k]+\frac{1}{1+D_i^+} x_i[k] \right) }_{=\sigma_i[k+1]}\nonumber \\
=&~ x_j[k]   +    D_j^+ \sigma_j[k] -\sum_{v_i \in \mathcal{N}_j^- }  \sigma_i[k]  \\
 =&~\eta_j[k];
 \end{align}
 thus, since  by the inductive assumption we have that $\eta_j[k]=x_j[0]$, it follows that $\eta_j[k+1]=x_j[0]$ .
 \end{proof}

 \section{Error Detection} \label{sec:error_detection_app}

The possible presence of malfunctioning nodes is one of the biggest headaches in distributed averaging schemes. It is easy to see that a faulty node can single handedly cause a deviation of the consensus value from the average $\overline{V}$. For example, if node $v_i$ is ``stubborn" in the sense that it does not follow the update in \eqref{alg_x} but maintains its value at all time instants, i.e., $x_i[k] = V_i$ for all $k$, then the final consensus value for {\em all} nodes will be $V_i$ \cite{pasqualetti2007distributed,sundaram2011distributed}. Thus, in order to ensure that averaging algorithms work properly, it is desirable to have a mechanism built-in the algorithm to detect such errors during execution, and  correct them. In this paper, we focus on the detection part.

Next, we will briefly discuss some existing approaches, based on modular/time redundancy or parity checking techniques, for detecting/correcting such computational errors in discrete-time linear time-varying systems whose dynamics is similar to those of the averaging algorithms in the class discussed in Section~\ref{sec:algorithms}. Then, we will introduce a novel approach, which we refer to as \textit{any-time consistency checking}, that is related to parity checking, and discuss its key properties that make it appealing for detecting computational errors in faulty nodes that are executing distributed averaging algorithms within the class of iterest. Finally, we illustrate how any-time consistency checking can be leveraged against error detection using  the invariants identified in Section~\ref{sec:invariants}.

\subsection{Existing Techniques for Error Detection/Correction}

Error detection and correction techniques typically rely on introducing redundancy in the computation by performing the same computation, either multiple times (time redundancy) or over multiple system replicas (modular redundancy) \cite{koren2020fault}. For example, in a modular redundancy scheme, one would utilize several replicas of the given system, initialized identically and driven by the same inputs, and would compare the (ideally identical) results provided by them; in case of a disagreement, one would choose the most likely result (e.g., the one agreed upon by the majority of the system replicas) \cite{koren2020fault}. In dynamic settings where a computation is performed over several time steps, one can choose to perform checks {\em concurrently} (i.e., at the end of each time step) or {\em non-concurrently} (e.g., periodically). Modular (time) redundancy schemes impose hardware (time) overhead; concurrent checking can also impose significant delays, whereas non-concurrent checking has to deal with error propagation over time (since an error is not caught immediately and it is allowed to propagate, at least until the next check is performed). 

Parity check techniques offer an alternative to time/modular redudancy schemes, but have to identify inherent invariant properties among the given system's variables (if any). The validity of an invariant is checked during execution and is associated with a parity check \cite{patton1991review}; when the later is violated, an error is detected and (if possible) the outcomes of the parity checks can be leveraged towards error correction. In computational systems, when inherent invariant properties cannot be identified (or are not sufficient to capture all errors of interest), one can attempt to introduce such invariant properties by using encoding techniques; resulting approaches are referred to as information redundancy \cite{koren2020fault}. For linear dynamic systems, information redundancy has led to approaches that introduce and update at each time step a small number of additional variables, whose values can be used to perform concurrent or non-concurrent detection/identification of computational errors that may have been introduced in the system variables (or even in the additional variables themselves) \cite{hadjicostis2003nonconcurrent}.

The implementation of the above approaches can be challenging in distributed settings like the one in \eqref{alg_x}, because of constraints on the information that is available to the added redundancy or the checker. One such example appears in \cite{yuan2019resilient} and \cite{yuan2021secure}, both of  which focus on distributed systems with symmetric exchange of information between pairs of nodes,  and aim at detecting errors due to malicious activity in one or more nodes. The authors proposed a concurrent checking scheme where, at each time step $k$, each node $v_j$ checks each neighbor~$v_i$ assuming that node $v_j$ has access to all its ``inputs" (values provided to $v_i$ by its neighbors, which are generally two-hop neighbors of node $v_j$). Distributivity constraints aside, this approach amounts to multiple time redundancy schemes in which checks are performed concurrently. Also in the context of  \eqref{alg_x} for the case when the exchange of information between pairs of node is symmetric, the approach in \cite{hadjicostis2023identification} explored the application of the non-concurrent checking schemes in \cite{hadjicostis2003nonconcurrent} towards distributed detection and identification of malicious nodes.

 \subsection{Any-Time Consistency Checking}

Considering the linear iteration in \eqref{alg_x}, we are interested in {\em any-time} consistency checks (ATC's) that resemble parity checks (in the sense that they can be used to detect computational errors) but can be performed at any point in time (i.e., they do not need to be checked at each time step). Moreover,  such ATC's must have the following two   properties:
\begin{itemize}
\item[\bf P1.] If all ATC's at time step $k_0$ are valid, then if the nodes continue the iteration in \eqref{alg_x} for $k=k_0, k_0+1, k_0+2, \dots,$ {\em without} any subsequent errors, they will converge to the correct average of their initial values, $\overline{V}$.
\item[\bf P2.] If one or more ATC's at iteration $k_0$ are invalid, then there have been errors by at least one node at one or more time steps before $k_0$. 
\end{itemize}
Note that the above two properties leave out the possibility for the checks at time step $k_0$ to be valid even though there have been errors within the time window $0, 1, \dots,  k_0-1$, as long as this behavior does {\em not} affect the outcome of the distributed computation. In other words, the nodes will still converge to $\overline{V}$ if they continue the computation without further errors. Next we describe how to utilize the invariant properties established in Section~\ref{sec:invariants} to build any-time checking schemes with the aforementioned desired properties.

\subsection{Invariant-Based Error Detection}

In the remainder, we focus on a special case of the linear iterative scheme in \eqref{alg_x}; namely, the ratio-consensus algorithm described in Section~\ref{sec:algorithms}. However, we believe  the results can be  extended to the general case; we plan to pursue such  extension in future work.

We are given an $N$-node distributed system whose communication topology is described by a strongly connected directed graph $\mathcal{G}=\{\mathcal{V},\mathcal{E} \}$; the objective of the nodes is to compute the average $\overline{V}$ in \eqref{DEFv} utilizing the linear iteration in \eqref{alg_x_LTI_1}. We are interested in detecting computational errors that corrupt the entries of the $x_i[k]$ vector of a particular node~$v_i$ at one or more time steps. Apart from special cases (such as stalling or power failure which are easily detected), the effect of computational errors can be modeled via an additive error vector  $e_i[k]$, i.e.,   $x_i[k]$ changes  to $x_i[k] + e_i[k]$ at iteration step $k$. If no action is taken  to remedy the error, it will subsequently propagate to the out-neighbors of node $v_i$ (through the linear iteration in \eqref{alg_x_LTI_1}), then to their out-neighbors, and eventually to the whole network (since the corresponding graph is assumed to be strongly connected).

Instead of checking at each time step for the presence of errors (like $e_i[k]$ in the above discussion), we will build any-time checks based on the invariants in \eqref{alg_x_LTI_1_inv}. More specifically, the invariant associated with each node $v_i$ will be checked by each of its out-neighbors. Since the running sums in \eqref{runningsumratio} are needed to check the invariant, we modify a bit the implementation of the iteration in \eqref{alg_x_LTI_1}. More specifically, each node executes the iteration as 
\begin{align}
x_j[k+1]&=\frac{1}{1+D_j^+} x_j[k]+\sum_{v_i \in \mathcal{N}_j^-} (\sigma_i[k+1] - \sigma_i[k]),   \label{alg_x_LTV_1_running_sum}
\end{align}
which requires each node $v_j$ to maintain the following variables at each iteration $k$:
\begin{itemize}
\item[\bf V1.] The value $\sigma_j[k+1]$ (broadcasted to its out-neighbors at iteration $k$), in order to be able to update its own running sum (using \eqref{runningsumratio}) at the next iteration.
\item[\bf V2.]  The values $\sigma_i[k]$ for each in-neighbor $v_i \in \mathcal{N}^-_j$ (broadcasted by node $v_i$ at the previous time-step).
\end{itemize}

In addition, we make the following mild assumption.

{\bf Assumption~A2.} Each node $v_j$ knows the local topology around each of its in-neighbors, i.e., node $v_j$ is aware of the in-neighbors and out-neighbors of each $v_i \in \mathcal{N}^-_j$; in particular, node $v_j$ knows the out-degree $D^+_i$ of each in-neighbor $v_i \in \mathcal{N}_j^-$.

Now, let us consider an arbitrary node $v_j$ and an arbitrary in-neighbor of it denoted by $v_i$ (i.e., $v_i \in \mathcal{N}^-_j$). We use $v_{i'}$ to denote an arbitrary in-neighbor of node $v_i$ (i.e., $v_{i'} \in \mathcal{N}^-_i$) and $v_{l'}$ to denote an arbitrary out-neighbor of node $v_i$ (i.e., $v_{l'} \in \mathcal{N}^+_i$). Suppose that periodically (say once every $K$ iteration steps where $K$ is a design parameter), node $v_j$ receives additional information from the in-neighbors of node~$v_i$. Specifically, if we use $k_0$ to denote the index of one such iteration step, we assume that node $v_j$ receives the value of $\sigma_{i'}[k_0]$ from each $v_{i'} \in \mathcal{N}^-_i$. This could be easily achieved by having each node $v_{i'}$ transmit (every $K$ steps) its running sum $\sigma_{i'}[k_0]$ at a higher power, so that this running sum is received not only by its out-neighbors but also by the out-neighbors of its out-neighbors.\footnote{In the case of symmetric information exchange, these messages are equivalent to the two-hop information utilized in \cite{yuan2021secure}. Such transmissions are likely more expensive and undesirable,  and that is one of the reasons we propose a checking scheme that requires this information only infrequently.} [Alternatively, we can start with a dense directed graph, but then limit the immediate neighborhood of each node in order to enable the checking properties that we need (i.e., we  design the   network topology of the distributed system so that the information needed at each node $v_j$ is available).]

Now, since node $v_j$ also receives the values of $x_i[k]$ at each iteration, this node can check whether the invariant in \eqref{alg_x_LTI_1_inv} holds for node $v_i$, i.e., whether the following any-time consistency check (performed by node $v_j$ on node $v_i$ at time step $k_0$) 
$$
c_i^{(j)}[k_0] := \eta^{(j)}_i[k_0] - x_i[0]
$$
is zero or not, where $\eta^{(j)}_i[k_0]$ can be calculated by node $v_j$ via
$$
\eta^{(j)}_i[k_0] = x_i[k_0] + D_i^+ \sigma_i[k_0] - \sum_{v_{i'} \in \mathcal{N}_i^-  } \sigma_{i'}[k_0]
$$
as all needed information is available to it.  {Note that the value $x_i[0]$ for each in-neighbor $v_i \in \mathcal{N}^-_j$ can be obtained from $\sigma_i[1]$ transmitted by $v_i$ at time step $k=0$ as $x_i[0]=(1+D_i^+)\sigma_i[1]$.} It is also worth pointing out that other out-neighbors of node $v_i$ will also be checking node $v_i$ through ATC's like the above; in fact, all such ATC's will evaluate at the same value since they are based on identical information. For this reason, we can drop the superscript $j$ and simply refer to the ATC~$c_i$.

The following result establishes Property~P1 for the any-time consistency checks.

\begin{lemma}
\label{THEanyconvergence}
Suppose that all ATC's satisfy $c_i[k_0]=0$ for all $v_i \in \mathcal{V}$. If nodes continue the distributed averaging algorithm in \eqref{alg_x_LTI_1} for iterations $k_0$, $k_0+1$, \dots, without any errors, then all nodes will reach consensus to the average $\overline{V}$ in \eqref{DEFv}.
\end{lemma}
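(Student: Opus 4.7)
The plan is to reduce the claim to the standard convergence result for ratio-consensus (equation \eqref{EQratiolim}) applied from time $k_0$ onwards, with $X[k_0]$ playing the role of the initial condition. To do this, I only need to show that, under the hypothesis $c_i[k_0]=0$ for every $v_i\in\mathcal{V}$, the global invariant $\sum_j x_j[k_0]=\sum_j x_j[0]$ is in force. Once that is established, the desired convergence follows because the iteration from $k_0$ onward is an unperturbed instance of the ratio-consensus algorithm on the same graph, and Assumption~A1 (which is stated for all $\tau=0,1,2,\dots$) continues to hold if we shift the time origin to $k_0$.

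The main step is therefore the following summation argument. By hypothesis, $\eta_i[k_0]=x_i[0]$ for every $v_i$. Summing the definition of $\eta_j[k_0]$ from \eqref{alg_x_LTI_1_inv} over all nodes gives
\begin{align}
\sum_{v_j\in\mathcal{V}} x_j[0] \;=\; \sum_{v_j\in\mathcal{V}} x_j[k_0] \;+\; \sum_{v_j\in\mathcal{V}} D_j^{+}\sigma_j[k_0] \;-\; \sum_{v_j\in\mathcal{V}}\sum_{v_i\in\mathcal{N}_j^{-}} \sigma_i[k_0].
\end{align}
The last two sums cancel by a double-counting identity: each $\sigma_i[k_0]$ appears in the double sum exactly once for every $v_j$ with $v_i\in\mathcal{N}_j^{-}$, which is precisely $D_i^{+}$ times, so $\sum_j\sum_{v_i\in\mathcal{N}_j^{-}}\sigma_i[k_0]=\sum_i D_i^{+}\sigma_i[k_0]$. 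Consequently,
\begin{align}
\sum_{v_j\in\mathcal{V}} x_j[k_0] \;=\; \sum_{v_j\in\mathcal{V}} x_j[0] \;=\; \big[\,\textstyle\sum_l V_l,\; N\,\big]^{\top}.
\end{align}

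Finally, from $k_0$ onward the nodes execute \eqref{alg_x_LTI_1} without errors, so the column-stochastic weight matrix machinery of Section~\ref{sec:algorithms} applies verbatim with $X[k_0]$ as the initial state. By weak ergodicity of the product of the $W[k]$'s (see \eqref{weak_ergodicity}), each node's state satisfies $y_j[k]\to\pi_j[k]\sum_l y_l[k_0]$ and $z_j[k]\to\pi_j[k]\sum_l z_l[k_0]$, and therefore, in view of the global invariant just established,
\begin{align}
\lim_{k\to\infty} r_j[k] \;=\; \frac{\sum_l y_l[k_0]}{\sum_l z_l[k_0]} \;=\; \frac{\sum_l V_l}{N} \;=\; \overline{V}, \qquad \forall\, v_j\in\mathcal{V}.
\end{align}

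I do not anticipate a serious obstacle: the local-to-global reduction via double counting is a clean combinatorial identity on the graph, and everything else is an application of the convergence analysis already in place. The only point requiring a little care is confirming that the ATC condition at a single snapshot $k_0$ is strong enough to recover both components of the state sum, which it is, because $\eta_j$ is vector-valued and the hypothesis $c_i[k_0]=0$ enforces equality in both coordinates simultaneously.
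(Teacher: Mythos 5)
Your proposal is correct and follows essentially the same route as the paper's proof: sum the per-node invariants over all of $\mathcal{V}$, cancel the $\sigma$ terms by the double-counting identity (each $\sigma_i$ appears $D_i^+$ times in the double sum), conclude $\sum_j x_j[k_0]=\sum_j x_j[0]$, and then invoke the standard convergence of the error-free iteration restarted at $k_0$. Your added remarks on the vector-valued nature of $\eta_j$ and the explicit appeal to weak ergodicity are just slightly more detailed versions of what the paper leaves implicit.
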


\begin{proof}
For all $v_i \in \mathcal{V}$, we have that $c_i[k_0]=0$ or, equivalently, that 
$$
x_i[k_0] + D^+_i \sigma_i[k_0] - \sum_{v_{i'} \in \mathcal{N}_i^-  } \sigma_{i'}[k_0] = x_i[0] \; .
$$
If we sum up the above equalities over all $v_i \in \mathcal{V}$, on the right of the equality we get $\sum_{v_i \in \mathcal{V}} x_i[0]$ and on the left we get
$$
\sum_{v_i \in \mathcal{V}} x_i[k_0] + \sum_{v_i  \in \mathcal{V}} D^+_i \sigma_i[k_0] - \sum_{v_i \in \mathcal{V}} \sum_{v_{i'} \in \mathcal{N}_i^-  } \sigma_{i'}[k_0] \; .
$$
The above quantity simplifies to $\sum_{v_i \in \mathcal{V}} x_i[k_0]$ (because each $\sigma_i$ appears exactly $D^+_i$ times in the double sum (with a negative sign). We conclude that
$$
\sum_{v_i \in \mathcal{V}} x_i[k_0] = \sum_{v_i \in \mathcal{V}} x_i[0] \; ;
$$
thus, the average consensus iteration that starts at iteration $k_0$ with values $x_j[k_0]$ for each $v_j \in \mathcal{V}$ converges to $\frac{\sum_{v_j \in \mathcal{V}} x_j[k_0]}{N}$, which is identical to the average $\overline{V}$ in \eqref{DEFv}.
\end{proof}

As mentioned earlier, if a computational error corrupts the value of $x_i[k]$ at time step $k$, the error will next propagate to its out-neighbors, by affecting the values $\sigma_i[k+1]$ that are transmitted to its out-neighbors (and thus the $x_j[k+2]$ values computed by each out-neighbor $v_j$, $v_j \in \mathcal{N}^+_i$); subsequently, the error will affect the values of nodes further away from $v_i$. It can be shown that, in such case, the error will show as a violation of the invariant for node $v_i$ for time steps after $k$, but will not affect the validity of the invariants of other nodes. This opens the door to the possibility of developing error identification and correction, however, we leave such developments for future work. 
 
It is possible for multiple errors to corrupt $x_i[k]$ over several time steps in a way that the invariant for node $v_i$ still holds at the end of the period (when ATC's are evaluated). In such case, the errors go undetected but, as established by Lemma~\ref{THEanyconvergence}, the nodes will converge to the correct average (at least as long as no more errors are introduced in the computation after the evaluation of ATC's). This also gives an opportunity to nodes that suffer temporal faults to make appropriate corrections.

\section{Concluding Remarks}
In this paper we have reported on a newly-discovered invariant property  for a class of linear-iterative algorithms used in distributed systems to solve the average consensus problem. Such property is local to each node in the distributed system, which makes it ideal for implementing distributed schemes for detecting errors during the execution of the algorithms. We have proposed one such error detection scheme  for one of the algorithms within the class considered in this paper; namely the ratio-consensus algorithm. 

In future work, we plan to extend  the proposed error detection schemes to include  all other algorithms within the class of interest. We also plan to leverage the information provided by the invariant to develop schemes for correcting the errors introduced in the computation and guarantee convergence. We will consider both the case when one wants to keep the values of all nodes (including the faulty ones) in the computation of the average, and the case when it is desirable to remove the values of faulty nodes from the average computation.

\bibliographystyle{IEEEtran}

\bibliography{bibliografia_consensusV6}

\end{document}